\newcommand{\app}[2]{\textsf{apply}(#1, #2)}
\newcommand{\param}[2]{\textsf{params}(#1, #2)}
\newcommand{\retTy}[2]{\textsf{typeOf}(#1, #2)}
\newcommand{\vars}[1]{\textsf{vars}(#1)}
\newcommand{\genLam}[2]{\textsf{lambda}(#1, #2)}
\newcommand{\Nat}{\Dat{Nat}}
\renewcommand{\lam}[2]{\lambda #1.~#2}
\begin{document}

\title{Elegant elaboration with function invocation}

\author{Tesla Zhang}
\address{The Pennsylvania State University}
\email{yqz5714@psu.edu}
\date\today

\begin{abstract}
We present an elegant design of the core language in a dependently-typed
lambda calculus with $\delta$-reduction and an elaboration algorithm.
\end{abstract}
\maketitle
\section{Introduction}
\label{sec:intro}

Throughout this paper, we will use \fbox{boxes} in the following two cases:
\begin{itemize}
\item to clarify the precedences of symbols when the formulae become too large. \\
e.g. \fbox{$\Gvdash \fbox{$\lam x M$}~:~\fbox{$(y:A)\to B$}
\Leftarrow \fbox{$\lam x u$}$}.
\item to distinguish type theory terms from natural language text. \\
e.g. we combine a term \fbox{$a$} with a term \fbox{$b$} to get a term \fbox{$a~b$}.
\end{itemize}

In the context of practical functional programming languages,
functions can be either \textit{primitive} (like arithmetic operations for primitive numbers,
cubical primitives~\cite[\S 3.2, \S 3.4, \S 5.1]{TR-X} such as
\textsf{transp}, \textsf{hcomp}, \textsf{Glue}, etc.)
or \textit{defined} (that have user-definable reduction rules, e.g. by using pattern matching).
The reduction of function applications are known as $\delta$-reduction~\cite{ACM-Delta}
and the to-be-reduced terms are called $\delta$-redexes
(\textit{redex} for \textit{red}ucible \textit{ex}pressions).

In general, function definitions may reduce only when \textit{fully applied},
while they are also \textit{curried}, allowing the flexible partial application.
So, from the type theoretical perspective, functions are always \textit{unary}
and function application as an operation is \textit{binary},
while in the operational semantics, function application is n-ary and $\delta$-reduction
only works when enough arguments are supplied.
We may use \textit{elaboration}, a process that transforms a user-friendly
\textit{concrete syntax} into a compact, type theoretical \textit{core language}
using type information, to deal with this inconsistency between the intuitive
concrete syntax and the optimal internal representation.

For defined functions, even when they are sometimes elaborated
as a combination of lambdas and case-trees~\cite{DepPM} or eliminators~\cite{Goguen06}
(so we can think of the function syntax as a syntactic sugar of lambdas),
the elaborated terms are related to the implementation detail of the programming language, which 
we tend to hide from the users.

To ensure the functions are fully applied, we may check whether sufficient arguments
 are supplied every time we want to reduce an application to a function 
Elaboration is also helpful here: we could choose an efficient representation of
function application in the core language and elaborate a user-friendly concrete
syntax into it. Consider a term \fbox{$\func{max}~a~b$} where $\func{max}$
is a function taking two natural numbers and returning the larger one,
we present two styles of function applications in the core language.

\begin{itemize}
\item \textit{binary application}.
The syntax definition of application is like $e::=e_1~e_2$.
The above term will be structured as \fbox{$\fbox{$\func{max}~a$}~b$}.
\item \textit{spine-normal form}.
The syntax definition of application is like $e::=e_1~\overline{e}$
\\($\overline e$ is called a \textit{spine}),
where arguments are collected as a list in application terms.
The above term will be structured as-is.
\end{itemize}

If we choose the binary representation,
checking the number of supplied arguments requires a traversal of the terms,
which is an $O(n)$ process where $n$ is the number of parameters of the applied function
(accessing the applied function from an application term is also $O(n)$).

This is not a problem in spine-normal form representation, where we only need to check
if the size of the arguments is larger than the number of parameters during $\delta$-reduction.
Size comparison is an $O(1)$ process for memory-efficient arrays.
However, inserting extra arguments to the spine is a list mutation operation,
which requires a list reconstruction in a purely functional setting.
The list reconstruction creates a new list with a larger size and copies the old list
with the new argument appended to the end, which is again an $O(n)$ process.

Similar problems also exist for \textit{indexed types}~\cite{SIT},
where we need the types to be fully applied to determine the available constructors.

\subsection{Contributions}
\label{sub:contrib}
We discuss an even more elegant design of application term representation where
neither traversal of terms during reduction nor mutation of the argument lists
during application are needed.
We will use \textit{both} binary application \textit{and} spine-normal form
to guarantee that the number of supplied arguments always fits the requirement,
and we could always assume function applications to have sufficient arguments supplied.

We present the syntax (in~\cref{sec:syntax}) and a bidirectional-style elaboration
algorithm (in~\cref{sec:elab}) which can be adapted to the implementation of
any programming language with $\delta$-reduction.

\section{Syntax}
\label{sec:syntax}
We will assume the existence of well-typed \textit{function definitions}
(will be further discussed in~\cref{sec:elab}),
and focus on lambda calculus terms.
Inductive types and pattern matching are also assumed and omitted.

\subsection{Core language}

The syntax is defined in~\cref{fig:term}.

Core terms have several assumed properties: well-typedness, well-scopedness,
and the functions are \textit{exactly-applied} --
the number of arguments matches exactly the number of parameters.

\begin{figure}[ht!]
\begin{align*}
  x,y     ::= & && \text{variable names} \\[-0.3em]
  A,B,u,v ::= & \quad \func f~\overline u && \text{exactly-applied function} \\[-0.3em]
    \mid & \quad x && \text{reference} \\[-0.3em]
    \mid & \quad u~v && \text{binary application} \\[-0.3em]
    \mid & \quad (x:A)\to B && \text{$\Pi$-type} \\[-0.3em]
    \mid & \quad \lam x u && \text{lambda abstraction} \\[-0.3em]
  \sigma ::= & \quad \overline{u/x} && \text{substitution object}
\end{align*}
\caption{Syntax of terms}
\label{fig:term}
\end{figure}

We will assume the substitution operation on core terms,
written as \fbox{$u~[\sigma]$}.

\begin{example}
\label{ex:nest}
For a function $\func f$ who has two parameters
and returns a function with two parameters,
the application $\func f~u_1~u_2~u_3~u_4$
is structured as \fbox{$\fbox{$\fbox{$\func f~u_1~u_2$}~u_3$}~u_4$}.
The innermost subterm $\func f~u_1~u_2$ is an exactly-applied function
and the outer terms are binary applications.  
\end{example}


\subsection{Concrete syntax}
The concrete syntax is defined in~\cref{fig:expr}.
We assume the names to be resolved in concrete syntax,
so we can assume the terms to be well-scoped and
distinguish local references and function references.

The symbol for local references and functions are overloaded
because they make no difference between concrete and core.

\begin{figure}[h!]
\begin{align*}
  M,N ::= & \quad \func f && \text{function reference} \\[-0.3em]
    \mid & \quad x && \text{local reference} \\[-0.3em]
    \mid & \quad M~N && \text{application} \\[-0.3em]
    \mid & \quad \lam x M && \text{lambda abstraction}
\end{align*}
\caption{Concrete syntax tree}
\label{fig:expr}
\end{figure}

In the concrete syntax, applications are always binary.

We do not have $\Pi$-types in concrete syntax because
they are unrelated to application syntax and their type-checking
is relevant to the \textit{universe type} which is quite complicated.

\section{Elaboration}
\label{sec:elab}
In this section, we describe the process that type-checks concrete terms against
core terms and translates them into well-typed, exactly-applied core terms.

We define the following operation $\app u v$ to eliminate
obviously reducible core terms:

\begin{align*}
\app{\lam x u}v & ::= u[v/x] \\
\app u v & ::= u~v
\end{align*}

We will use a bidirectional elaboration algorithm~\cite{BidirTyck,MiniTT}.
It uses the following typing judgments, parameterized by a context $\Gamma$:

\begin{itemize}
\item $\Gvdash M:A \Leftarrow u$,
 normally called a \textit{checking} or an \textit{inherit} rule.
\item $\Gvdash M \Rightarrow u:A$,
 normally called an \textit{inferring} or a \textit{synthesis} rule.
\item $x:A \in \Gamma$, that the context $\Gamma$ contains a local binding $x:A$.
\end{itemize}

Checking judgments are for introduction rules,
while synthesis judgments are for the formation and elimination rules.
The direction of the arrows is inspired from~\cite[\S 6.6.1]{MiniTT}.
The arrows ($\Leftarrow$ and $\Rightarrow$) separate the inputs and outputs
of the corresponding elaboration procedure:
checking judgments take as input a term in concrete form and its expected type in core form,
and produce as output an elaborated version of the term in core form;
synthesis judgments take as input a term in concrete form and emit its elaborated form and the
inferred type as output.

The context $\Gamma$ contains a list of local bindings
(pairs of local variables and types, written as $x:A$)
and a list of functions.

A function is defined with a name and a signature
(and a body that we do not care about in this paper),
where the signature tells us about its parameters and the return type.

\subsection{Abstraction and application}

First of all, we have the basic elaboration rules for
type conversion and local references: 

\begin{mathpar}
\inferrule{\Gvdash M \Rightarrow u:A \\ \Gvdash A =_{\beta\eta} B}
{\Gvdash M:B \Leftarrow u}~\textsc{Conv} \and
\inferrule{x:A \in \Gamma}{\Gvdash x \Rightarrow x:A}~\textsc{Var}
\end{mathpar}

The rules for lambda abstraction and application are
quite straightforward:

\begin{mathpar}
\inferrule{\Gamma,x:A \vdash M:B[x/y] \Leftarrow u}
{\Gvdash \fbox{$\lam x M$}~:~\fbox{$(y:A)\to B$}
\Leftarrow \fbox{$\lam x u$}}~\textsc{Lam}
\and
\inferrule{\Gvdash M \Rightarrow u:\fbox{$(x:A)\to B$} \\
\Gvdash N:A \Leftarrow v}
{\Gvdash M~N
\Rightarrow \app u v:B[v/x]}~\textsc{App}
\end{mathpar}

\subsection{Functions}

Before continuing to function references,
we need to define the notation for function signature.
We define \textit{parameters} to be a list of local bindings:

$$
\Delta ::= \overline{x:A}
$$

Then, we assume two operations for every function symbol $\func f$:

\begin{itemize}
\item $\param\Gamma{\func f}$ that returns a $\Delta$
 that represents the parameters of $\func f$ in $\Gamma$.
\item $\retTy\Gamma{\func f}$ that returns a term \fbox{$u$}
 that represents the type (combining parameters and the return type) of $\func f$ in $\Gamma$.
\end{itemize}

We will need a few more operations before introducing the
elaboration rules for functions.

$\vars\Delta$: extracts the variables (as a list of terms) from $\Delta$:

\begin{align*}
\vars\emptyset &::=\emptyset \\
\vars{x:A,\Delta} &::= x,\vars\Delta \\
\end{align*}

$\genLam u\Delta$: generates a lambda abstraction by induction on $\Delta$:

\begin{align*}
\genLam u\emptyset &::=u \\
\genLam u{\fbox{$x:A,\Delta$}} &::= \lam x {\genLam u\Delta} \\
\end{align*}

With these two operations, we can define the elaboration
rule for functions. Let $\Delta = \param\Gamma{\func f}$,
we have the synthesis rule for function references as in~\cref{eqn:fun}.

\begin{figure}[h!]
\begin{mathpar}
\inferrule{\func f \in \Gamma}
{\Gvdash \func f \Rightarrow \genLam{\fbox{$\func f~\vars\Delta$}}
\Delta : \retTy\Gamma{\func f}}
\end{mathpar}
\caption{Elaboration of functions}
\label{eqn:fun}
\end{figure}

Here are some examples.
Consider the $\func{max}$ function discussed in~\cref{sec:intro},
we have the following facts:

\begin{equation*}
\begin{aligned}
\param\Gamma{\func{max}} &::= x:\Nat,y:\Nat \\
\retTy\Gamma{\func{max}} &::= (x:\Nat)\to(y:\Nat)\to\Nat
\end{aligned}
\end{equation*}

\begin{example}
\label{ex:max}
Concrete term \fbox{$\func{max}$}.
By the rule in~\cref{eqn:fun},
\begin{equation*}
\begin{aligned}
 & \genLam{\func{max}~\vars\Delta}\Delta & \\
 &= \genLam{\fbox{$\func{max}~\vars{x:\Nat,y:\Nat}$}}{\fbox{$x:\Nat,y:\Nat$}}
  & \text{Expand}~\Delta \\
 &= \genLam{\func{max}~x~y}{\fbox{$x:\Nat,y:\Nat$}} & \text{Expand}~\textsf{vars} \\
 &= \lam x{\lam y{\func{max}~x~y}} & \text{Expand}~\textsf{lambda}
\end{aligned}
\end{equation*}
Observe that $\func{max}$ is exactly-applied.
The result type is $\retTy\Gamma{\func{max}}$,
which equals $(x:\Nat)\to(y:\Nat)\to\Nat$.
\end{example}

\begin{example}
Concrete term \fbox{$\func{max}~M$}, assuming $\Gvdash M:\Nat \Leftarrow u$.
We already know the elaboration result of $\func{max}$ in~\cref{ex:max}.
By the \textsc{App} rule, since
$\app{\fbox{$\lam x{\lam y{\func{max}~x~y}}$}}u \mapsto \fbox{$\lam y{\func{max}~u~y}$}$,
the elaborated version of $\func{max}~M$ is $\lam y{\func{max}~u~y}$,
typed $(y:\Nat)\to\Nat$.
Observe that $\func{max}$ is still exactly-applied.
\end{example}

\begin{thm}
Functions are always exactly-applied in the core language,
as promised in~\cref{sub:contrib}.
\end{thm}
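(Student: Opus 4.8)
The plan is to argue by induction on the derivation of the elaboration judgments $\Gvdash M \Rightarrow u : A$ and $\Gvdash M : A \Leftarrow u$, establishing the stronger invariant that every core term $u$ produced as output by either judgment has all of its $\func f~\overline v$ subterms exactly-applied (and, implicitly, well-typed and well-scoped, which we take as given). Since the core grammar in~\cref{fig:term} is the only source of $\func f~\overline v$ nodes, it suffices to check that each elaboration rule either introduces no such node, or introduces one with a spine whose length matches the parameter count of $\func f$, and that no rule can ever turn an exactly-applied node into an over- or under-applied one.

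The key steps, rule by rule: For \textsc{Var} the output is just $x$, with no function node. For \textsc{Conv} the output term is copied verbatim from the premise, so the invariant transfers directly. For \textsc{Lam} the output is $\lam x u$ where $u$ comes from the premise; the $\lambda$-wrapper adds no function node. For \textsc{App} the output is $\app u v$; here I must observe that $\app{\cdot}{\cdot}$ either forms a binary application $u~v$ (no new function node) or performs a single substitution $u'[v/x]$ when $u = \lam x u'$ — and substitution on core terms does not change the length of any spine, since $\sigma$ ranges over $\overline{u/x}$ and only replaces variable leaves, never the head $\func f$ of an application. So the invariant is preserved. The crucial case is the rule in~\cref{eqn:fun}: the output is $\genLam{\func f~\vars\Delta}\Delta$ with $\Delta = \param\Gamma{\func f}$. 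By the definition of $\vars{\cdot}$, the list $\vars\Delta$ has exactly as many entries as $\Delta$ has bindings, hence exactly as many arguments as $\func f$ has parameters — so the node $\func f~\vars\Delta$ is exactly-applied. The surrounding $\genLam{\cdot}{\Delta}$ only wraps $\lambda$'s around it (by its inductive definition), adding no function nodes; and this is precisely what~\cref{ex:max} illustrates concretely.

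The main obstacle — and the only step requiring genuine care rather than inspection — is the \textsc{App} case, specifically controlling what $\app{\cdot}{\cdot}$ does in its reducing branch. One might worry that partially applying an elaborated function, as in $\app{\lam x{\lam y{\func{max}~x~y}}}u$, could leave $\func{max}$ under-applied; the point is that $\beta$-reducing the outer $\lambda$ substitutes $u$ for $x$ \emph{inside} the body, yielding $\lam y{\func{max}~u~y}$, which still feeds $\func{max}$ exactly two arguments. So I need a small substitution lemma: for any core term $w$ and substitution $\sigma = \overline{u/x}$ whose replacements $u$ themselves satisfy the invariant, $w[\sigma]$ satisfies the invariant, and in particular the arity of every $\func f~\overline v$ node in $w[\sigma]$ equals its arity in $w$. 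This is a routine structural induction on $w$: the only nontrivial clause is the application head $\func f~\overline v$, where substitution recurses into each $v_i$ without adding or removing spine entries. With that lemma in hand, every rule preserves the invariant, the induction closes, and since elaboration is the only way core terms enter the system, all functions in the core language are exactly-applied.
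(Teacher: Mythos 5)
Your proposal is correct and takes essentially the same approach as the paper: the paper's own proof is a two-sentence informal version of exactly this argument (the rule in~\cref{eqn:fun} is the only place exactly-applied function nodes are created, and no other rule or operation adds or removes spine arguments). Your version merely makes explicit the rule-by-rule induction and the substitution lemma for the \textsc{App} case that the paper leaves implicit in the phrase ``we do not take arguments away or insert new arguments \dots in any other rules or operations.''
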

\begin{proof}
We only generate exactly-applied functions in~\cref{eqn:fun},
and we do not take arguments away or insert new arguments to
function application terms in any other rules or operations.
\end{proof}

\section{Conclusion}
We have discussed an elegant core language design with $\delta$-redexes
with an elaboration algorithm.
With this design, wherever in the compiler,
we can assume any given $\delta$-redexes to be exactly-applied.
Appending extra arguments to an application term results in a binary
application term to a non-function (as in~\cref{ex:nest}).

\subsection{Implementation}
The discussed core language design is implemented in two proof assistants:

\begin{itemize}
\item Arend~\cite{Arend}.
There is an abstract class \textsf{DefCallExpression}
that generalizes all sorts of
\textit{definition invocations}, including functions, data types, constructors, etc.
These expressions are always exactly-applied.
The source code is hosted on GitHub~\footnote{
  See \url{https://github.com/JetBrains/Arend}}.
\item Aya~\cite{Aya}.
Similar to Arend, there is an interface \textsf{CallTerm}
in Aya. The source code is also hosted on GitHub~\footnote{
  See \url{https://github.com/aya-prover/aya-dev}}.
\end{itemize}

In the implementations, there is one extra complication: invocations to constructors of
inductive types have access to the parameters of the inductive type.

\subsection{Related work}
The notion of $\delta$-reduction was discussed in~\cite{ACM-Delta},
but the $\delta$-redexes are represented in binary application form.
Exactly-applied $\delta$-redexes are discussed in~\cite{DeltaRed, DepPM},
but they did not discuss elaboration.
The idea of separating spine-normal function application and binary application
also appeared in an early work on LISP~\cite[\S 6]{LISP}, where spines referred to as
\textit{rails} and binary application referred to as \textit{pairs},
but they also did not discuss elaboration.

In the elaboration of Lean 2~\cite{Lean2},
functions are transformed into lambdas and \textit{recursors}
(and they refer to the corresponding redexes as
$\beta$-redexes and $\iota$-redexs~\cite[\S 3.3]{Lean2}, respectively).
This design is not friendly for primitive functions that only work when fully applied.

\subsection{Acknowledgments}
We would like to thank Marisa Kirisame, Ende Jin, Qiantan Hong, and Zenghao Gao
for their comments and suggestions on the draft versions of this paper.

\printbibliography
\end{document}